\newtheorem{theorem}{Theorem}
\newtheorem{corollary}[theorem]{Corollary}
\theoremstyle{remark}
\newtheorem{proposition}{Proposition}
\newtheorem{example}{Example}
\DeclareMathOperator*{\argmax}{arg\,max}
\DeclareMathOperator*{\argmin}{arg\,min}
\title{A Center in Your Neighborhood: Fairness in Facility Location}
\author{Christopher Jung\\University of Pennsylvania \and Sampath Kannan\\University of Pennsylvania \and Neil Lutz\\Iowa State University}
\date{}
\begin{document}

\maketitle

\begin{abstract}
	When selecting locations for a set of facilities, standard clustering algorithms may place unfair burden on some individuals and neighborhoods. We formulate a fairness concept that takes local population densities into account. In particular, given $k$ facilities to locate and a population of size $n$, we define the ``neighborhood radius'' of an individual $i$ as the minimum radius of a ball centered at $i$ that contains at least $n/k$ individuals. Our objective is to ensure that each individual has a facility within at most a small constant factor of her neighborhood radius.

	We present several theoretical results: We show that optimizing this factor is NP-hard; we give an approximation algorithm that guarantees a factor of at most 2 in all metric spaces; and we prove matching lower bounds in some metric spaces. We apply a variant of this algorithm to real-world address data, showing that it is quite different from standard clustering algorithms and outperforms them on our objective function and balances the load between facilities more evenly.

\end{abstract}

\section{Introduction}
Fairness in decision making has become an important research topic as more and more classification decisions, such as college admissions, bank loans, parole and sentencing, 
are made with the assistance of machine learning algorithms~\cite{ChouRoth18}. Such decisions are made on individuals at a particular point in time, and although they have long-term consequences on the individuals affected, there is at least the prospect of these decisions being revisited with new data about these individuals. In contrast, certain infrastructural decisions, for example, about where to locate hospitals, schools, library branches, police stations, or fire stations have long-term consequences on all the residents of a town, district, or county.  Stories about neighborhoods not being adequately served make frequent headlines. Such stories range from food deserts in inner cities because of the absence of supermarkets that sell fresh fruit and vegetables to lack of access to medical services in rural areas~\cite{HaddonGasparro16,Virella18}.

What might be a fair way to locate $k$ hospitals, say, in a county with varying population densities, ranging from dense urban areas to sparse rural areas? 
The first approach one might think of would be to ask that everyone be within distance $x$ of their nearest hospital for the smallest possible $x$. This is the standard $k$-center formulation, but it is problematic from a fairness perspective for at least two reasons:

First, a hospital located in an urban area in a solution to this problem would likely be overused, since many people are likely to find this to be the hospital nearest to them. Thus, one kind of fairness we want is 
\textit{load balance}; the numbers of people served by each center should be as close to equal as possible. Intuitively the definition we give seems tailored to provide such balance, and we confirm this empirically.

Second, and perhaps more importantly, people living in areas with different population densities have different expectations for a reasonable distance to travel to a hospital. In rural areas, it would be unreasonable for a resident to expect to find a hospital within a mile, say, of her residence, but in an urban area this might be an entirely reasonable expectation. This is reinforced by the fact that individuals in dense urban areas --- especially dense, low-income urban neighborhoods --- are more likely to rely on bicycles or public transit and less likely to have access to a car~\cite{Trulia}. 

A good definition of fairness for such $k$-center or facility location problems should take into account population densities and geography. Consider the problem of serving a population $P$ of $n$ people
using  $k$ facilities, for a  given $k$. On average, we expect each facility to serve $n/k$ people. An individual $i$ might reasonably hope that the facility that serves $i$ is no farther than the $(\lceil \frac{n}{k} \rceil)$\textsuperscript{th} nearest individual from $i$, including $i$ itself. Thus, for a given $P$ and $k$, we define the \textit{neighborhood radius} $NR (i)$ to be the distance from $i$ to its $(\lceil \frac{n}{k} \rceil-1)$\textsuperscript{th} nearest neighbor.

It may not always be possible to find a solution with $k$ centers where each individual finds a center within her neighborhood radius, but our goal is to optimize how far we deviate from this ideal. Given a solution $S$ that specifies the placement of the $k$ centers, let $d(i,S)$ denote the distance from individual $i$ to the closest center in $S$. Let
\[\alpha (S) = \max _ i \frac{d(i,S)}{NR(i)}\]
denote the maximum factor by which an individual's distance to the center nearest to her, exceeds her neighborhood radius. We say that an algorithm achieves $\alpha$-fairness if the solution $S$ it produces has $\alpha (S) \le \alpha$.

The central problem we consider in this paper is:

\begin{description}
	\item[Fair $k$-Center:] Given $n$ points in a metric space, and a number $k$, find a solution $S^*$ consisting of a subset of at most $k$ of the given points so that \[S^* = \argmin _ {|S|\leq k} \alpha (S)\,.\]
\end{description}

One could formulate a Steiner version of this problem, where centers are allowed at arbitrary points in the  metric space, but we do not consider this variant in this paper.
We also formulate an extremal version of the problem: For a given metric space, what is the worst-case value, over all possible configurations of points in the metric space, of $\alpha(S^*)$?

We perform empirical comparisons between our fair $k$-center formulation and the standard $k$-center, $k$-means, and $k$-medians formulations. Using algorithms designed for each of these optimization problems, we select sets of facility locations based on two geographical data sets from Fairfax County, Virginia and Allegheny County, Pennsylvania.

Our results are as follows:

\begin{itemize}
\item There is an efficient algorithm that achieves $\alpha = 2$ for any set of points and any parameter $k$, in any metric space (Theorem~\ref{thm:general}).
\item Finding the optimal $\alpha$ for a given set of points and parameter $k$ is NP-hard (Theorem~\ref{thm:npcomplete}).
\item We prove that there are metric spaces and configurations of points for which $\alpha = 2$ is the best possible (Proposition~\ref{prop:generallower}). For Euclidean spaces we give a lower bound of $\alpha = \sqrt{2}$ (Proposition~\ref{prop:euclideanlower}).
\item On real data we show that standard clustering algorithms achieve worse $\alpha$ than is achieved by an algorithm we describe (Tables~\ref{table:fairfaxobjective} and~\ref{table:alleghenyobjective}).
\item We can associate with any algorithm a vector of at most $k$ values giving the number of points assigned to each center. Viewing load balance as the variance of this vector, 
we show empirically that our algorithm achieves much better load balance than the other clustering algorithms (Tables~\ref{table:clusterdev}).
\end{itemize}

\section{Defining $\alpha$-fairness}

We consider a nonempty collection $P$ of (not necessarily distinct) points in a metric space $(X,d)$ and some positive integer parameter $k\leq |P|$. A \emph{centers algorithm} takes an instance $(P,k)$ as input and returns a set $S\subseteq P$ with $|S|\leq k$ of designated \emph{centers}. The travel distance from a point $x\in X$ to $S$ is $d(x,S)$, the minimum distance from $x$ to a center in $S$:
\[d(x,S)=\min\{d(x,s):s\in S\}\,.\]

The goal of a centers algorithm is to select a ``good'' set of centers according to some criterion. For example, the following are well-studied optimization problems based on natural objective functions for assessing the quality of a solution set.

\begin{itemize}
\item[] \textbf{$k$-Center:} minimize the maximum travel distance among individuals in $P$, $\max_{i\in P}d(i,S)$~\cite{gonzalez1985clustering}.
\item[] \textbf{$k$-Medians:} minimize the average travel distance, or equivalently, $\sum_{i\in P}d(i,S)$~\cite{dubes1988algorithms}.
\item[] \textbf{$k$-Means:} minimize the sum of the squares of the travel distances, $\sum_{i\in P}d(i,S)^2$~\cite{macqueen1967some}.
\end{itemize}

The objective function we introduce, unlike those above, is based on the \emph{neighborhood radius} at a point $x\in X$, $NR(x)$. That is, the minimum radius $r$ such that at least $|P|/k$ of the points in $P$ are within distance $r$ of $x$:
\[NR_{P,k}(x)=\min\left\{r:\left\lvert B_r(x)\cap P\right\rvert\geq |P|/k\right\}\,,\]
where $B_r(x)$ is the closed ball of radius $r$ around $x$. When $P$ and $k$ are clear from context, we omit these subscripts and simply denote the neighborhood radius at $x$ by $NR(x)$.

We quantify the fairness of a set of centers $S$ on a set of points according to the worst ratio between travel distance and neighborhood radius for any point in $P$:
\[\alpha_{P,k}(S)=\sup_{i\in P}\frac{d(i,S)}{NR_{P,k}(i)}\,,\]
adopting the conventions that $0/0=\infty/\infty=1$ and $c/0=\infty$ for any $c>0$.

Given a centers algorithm $A$ and a constant $\alpha$, we say that $A$ achieves \emph{$\alpha$-fairness} on an instance $(P,k)$ if
\[\alpha_{P,k}(A(P,k))\leq \alpha\,.\] We say that $A$ is \emph{$\alpha$-fair} in the given metric space $(X,d)$ if it achieves $\alpha$-fairness on every instance. That is, if $\alpha_{P,k}(A(P,k))\leq \alpha$ for all $P\subseteq X$ and all $1\leq k\leq|P|$.

Solutions that are optimal for other standard objective functions can be infinitely unfair with respect to $\alpha$, as shown by the following example on the real line.
\begin{example}
Let $k=3$ and consider $P=\{-x,0,0,1,1,x\}$, where $x$ is some large number, as pictured in Figure~\ref{fig:different}. The optimal solution with respect to the $k$-center, $k$-medians, and $k$-means objective functions is to place one center at either 0 or 1 and the other two centers at $-x$ and $x$. But the neighborhood radius is 0 at 0 and 1, and whichever of these is not chosen as a center will have to travel a distance of 1 to the nearest center, meaning that
\[\alpha(\{-x,0,x\})=\alpha(\{-x,1,x\})\geq \frac10=\infty\,.\]

\begin{figure}[h]
	\begin{centering}
	\begin{tikzpicture}[dot/.style = {draw,fill,circle,thick,inner sep=0pt,minimum size=4pt}]
		\draw[thick,<->] (-4,0) -- (4,0);
		\node[dot] at (0,0.1) {};
		\node[dot] at (0,-0.1) {};
		\node[dot] at (0.5,0.1) {};
		\node[dot] at (0.5,-0.1) {};
		\node[dot] at (3.5,0) {};
		\node[dot] at (-3.5,0) {};
		\node at (0,-0.5) {0};
		\node at (0.5,-0.5) {1};
		\node at (3.5,-0.5) {$x$};
		\node at (-3.5,-0.5) {$-x$};
	\end{tikzpicture}
	\caption{For the above population with $k=3$, optimizing for $\max_{i\in P}d(i,S)$, $\sum_{i\in P}d(i,S)$, or $\sum_{i\in P}d(i,S)$ will yield a solution that is not $\alpha$-fair for any finite $\alpha$.}\label{fig:different}

	\end{centering}
\end{figure}
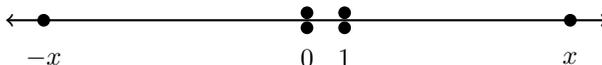
\end{example}

Although we do not consider allowing Steiner points as centers in this paper, it is clear that even optimal Steiner solutions to the three classical problems --- all of which place only one center in the interval $[0,1]$ --- do no better in terms of our fairness objective. This example demonstrates that a different approach is needed to achieve even the weakest of fairness guarantees. As a warm-up to exploring these approaches, we now give two examples of metric spaces in which strong fairness guarantees are easy to achieve.

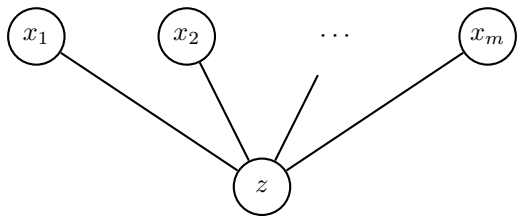
\begin{figure}[h]
	\begin{centering}
	\begin{tikzpicture}
	[every path/.style = {draw,thick},every node/.style = {inner sep=2pt, minimum size=0.75cm}]
		\node[draw,circle,thick] (z) at (0,0) {$z$};
		\node[draw,circle,thick] (x1) at (-3,2) {$x_1$};
		\node[draw,circle,thick] (x2) at (-1,2) {$x_2$};
		\node[minimum size=1cm] (dots) at (1,2) {$\cdots$};
		\node[draw,circle,thick] (xm) at (3,2) {$x_m$};
		\draw[thick] (z) -- (x1);
		\draw[thick] (z) -- (x2);
		\draw[thick] (z) -- (dots);
		\draw[thick] (z) -- (xm);
	\end{tikzpicture}
	\caption{1/2-fairness can be achieved in this instance by selecting $z$ as a center.}\label{fig:star}

	\end{centering}
\end{figure}

\begin{example}[A star graph]\label{ex:star} Consider the graph metric on a star graph: a graph with vertex set $X=\{z,x_1,\ldots, x_{m}\}$ and edge set $\{\{z,x_1\},\ldots,\{z,x_m\}\}$, as in Figure~\ref{fig:star}. If $(m+1)/2<k<m+1$, then $NR_{X,k}(z)=1$ and \[NR_{X,k}(x_1)=\cdots=NR_{X,k}(x_m)=2\,.\]
As long as $z\in S$, we have $d(z,S)=0$ and
\[d(x_1,S)=\cdots=d(x_m,S)=1\,,\]
so an algorithm that selects $z$ as a center achieves $1/2$-fairness for any such instance.
\end{example}

\begin{example}[The discrete metric]\label{ex:discrete} Consider any nonempty set $X$ under the discrete metric, where $d(x,y)=0$ if $x=y$ and 1 otherwise. Let $P\subseteq X$ be any nonempty set, let $1\leq k\leq |P|$, let $S\subseteq P$ be any nonempty set of size $k$, and let $i\in P$. If $k=|P|$, then we have $NR(i)=d(i,S)=0$. Otherwise, $NR(i)=1$ and $d(i,S)\in\{0,1\}$. Hence, there is a 1-fair algorithm for discrete metric spaces: Select any nonempty set of centers.
\end{example}

Most metric spaces do not share the essential property of Examples~\ref{ex:star} and~\ref{ex:discrete}: the existence of an extremely central point that is close to all other points. In general, achieving fairness requires more care about how one distributes multiple centers.

An ideal situation for the goal of 1-fairness would if the population were arranged in well-separated ``villages'' containing $n/k$ individuals each, where the diameter of each village is less than the space between villages. In this case, placing a single center anywhere in each village would achieve 1-fairness. 

But consider what happens if two villages are brought closer together: Some of an individual's $\lceil n/k\rceil -1$ closest neighbors might then reside in the other village, meaning that her neighborhood radius no longer encompasses the entirety of her own village, possibly including that village's center. This general situation is more difficult, but in the one-dimensional case, 1-fairness can still be achieved, as the following example shows.

\begin{example}[The real line] Given any finite set $P\subseteq\mathbb{R}$, Algorithm~\ref{alg:realline} starts from the left and takes every $\lceil n/k\rceil$\textsuperscript{th} point. Notice that the population $P$ in which the neighborhood radius is determined changes with each iteration. 
\begin{algorithm2e}
\SetAlgoLined
$S=\emptyset$\\
\While{$P\neq\emptyset$}{
	$s= \min P$\\
	$S=S\cup\{s\}$\\
	$P=P\setminus B_{NR_{P,k}}(s)$
}
\Return $S$\\
\caption{$\textsc{RealLineFairKCenter}(P,k)$}\label{alg:realline}
\end{algorithm2e}

Each iteration removes at least $n/k$ points from $P$, so the algorithm will terminate with at most $k$ centers. The $\lceil n/k\rceil$ closest points to any point on the line must include the $j\lceil n/k\rceil$\textsuperscript{th} smallest point for some $1\leq j\leq k$, so this algorithm is 1-fair.
\end{example}

Unfortunately, this approach cannot be extended to higher dimensions. As we show in Section~\ref{sec:theoretical}, 1-fairness is not always achievable, even in the Euclidean plane.

\section{Theoretical Results}\label{sec:theoretical}
Given an instance $(P,k)$, let $\alpha^*_{P,k}$ be the minimum value such that $\alpha$-fairness can be achieved. In this section we prove that
\[1/2\leq \alpha^*_{P,k}\leq 2\]
always holds, that equality is possible at each end of that bounding interval, and that $\alpha^*_{P,k}$ is NP-hard to compute.
\subsection{A 2-Fair Algorithm}
We now give an algorithm, \textsc{2FairKCenter}, that achieves 2-fairness on every instance and in every metric space. In each iteration, \textsc{2FairKCenter} chooses a center $s$ with minimum neighborhood radius among the set $Z$ of candidate centers. Then, it removes from $Z$ all points $i$ that are sufficiently close to $s$.

We have recently become aware that achieving 2-fairness is equivalent to finding a $(k/n)$-density net, as defined by Chan, Dinitz, and Gupta in the context of constructing slack spanners~\cite{ChDiGu06}. In proving that $\epsilon$-density nets can be found in polynomial time for all $\epsilon \in (0,1)$, that work describes an algorithm that is essentially identical to \textsc{2FairKCenter}. In order to keep this paper self-contained, we include the algorithm description and proof of 2-fairness here. 

\begin{algorithm2e}
\SetAlgoLined
$Z=P$\\
$S=\emptyset$\\
\While{$S\neq\emptyset$}{
	choose $s\in\argmin_{i\in Z}NR_{P,k}(i)$\\
	$S=S\cup\{s\}$\\
	$Z=\{i\in Z:d(i,s)>NR_{P,k}(i)+NR_{P,k}(s)\}$\\}	
\Return $S$
\caption{$\textsc{2FairKCenter}(P,k)$}\label{alg:general}
\end{algorithm2e}

\begin{theorem}\label{thm:general}
	\textsc{2FairKCenter} is $2$-fair in every metric space.
\end{theorem}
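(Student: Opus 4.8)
The plan is to verify two things about the output $S$ of \textsc{2FairKCenter}: that $|S|\le k$, so that $S$ is feasible, and that $\alpha_{P,k}(S)\le 2$, which is the fairness guarantee. I would establish the fairness bound first, since it follows directly from the greedy minimum-$NR$ selection rule. Fix any point $i\in P$ and track how it leaves the candidate set $Z$: it is removed either because it is itself selected as a center, in which case $d(i,S)=0\le 2\,NR_{P,k}(i)$, or because the condition $d(i,s)\le NR_{P,k}(i)+NR_{P,k}(s)$ holds for the center $s$ chosen in that iteration. The key observation in the second case is that $i$ was still present in $Z$ when $s$ was chosen, and $s$ minimizes $NR_{P,k}$ over $Z$; hence $NR_{P,k}(s)\le NR_{P,k}(i)$. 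Combining these inequalities yields $d(i,S)\le d(i,s)\le NR_{P,k}(i)+NR_{P,k}(s)\le 2\,NR_{P,k}(i)$, and since $i$ was arbitrary we get $\alpha_{P,k}(S)\le 2$.

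For the cardinality bound I would first argue that the selected centers are pairwise well separated and then count disjoint balls. If $s$ and $s'$ are distinct centers with $s'$ chosen in a later iteration, then $s'$ survived the iteration in which $s$ was chosen, so it failed that iteration's removal test, giving the strict separation $d(s,s')>NR_{P,k}(s)+NR_{P,k}(s')$. The main step is to convert this into disjointness of the neighborhood balls: were some point to lie in both $B_{NR_{P,k}(s)}(s)$ and $B_{NR_{P,k}(s')}(s')$, the triangle inequality would force $d(s,s')\le NR_{P,k}(s)+NR_{P,k}(s')$, contradicting the strict separation. Hence the balls $\{B_{NR_{P,k}(s)}(s):s\in S\}$ are pairwise disjoint, and since each contains at least $|P|/k$ points of $P$ by definition of $NR_{P,k}$, writing $m=|S|$ gives $m\cdot(|P|/k)\le |P|$, so $m\le k$.

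The step I expect to require the most care is the cardinality bound, specifically keeping the separation inequality strict so that the disjointness argument is airtight for closed balls, and applying the per-ball count of at least $|P|/k$ correctly. Termination is immediate: the chosen center $s$ satisfies $d(s,s)=0\le NR_{P,k}(s)+NR_{P,k}(s)$ and is therefore always removed from $Z$, so $Z$ strictly shrinks each iteration. The fairness factor of $2$, by contrast, falls out almost mechanically once the greedy selection rule is in hand.
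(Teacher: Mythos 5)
Your proposal is correct and follows essentially the same approach as the paper's proof: the cardinality bound via pairwise disjoint balls $B_{NR(s)}(s)$, each containing at least $|P|/k$ points, and the fairness factor $2$ via the greedy minimum-$NR$ selection rule giving $NR(s)\le NR(i)$ for any point $i$ removed when $s$ is chosen. Your version is in fact slightly more explicit than the paper's --- spelling out the triangle-inequality step for ball disjointness and the termination argument --- but the underlying argument is identical.
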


\begin{proof}
Fix a metric space $(X,d)$, let $P\subseteq X$, and let $k\leq |P|$. Let $s_j$ be the $j$\textsuperscript{th} center added to $S$. For any $j'>j$, the definition of the set $S$ guarantees that $B_{NR(s_j)}(s_j)$ and $B_{NR(s_{j'})}(s_{j'})$ are disjoint. Thus, the balls
\[B_{NR(s_1)}(s_1),B_{NR(s_2)}(s_2),\ldots\]
are all pairwise disjoint, and by the definition of neighborhood radius, each includes at least $|P|/k$ points in $P$. It follows that there can be at most $k$ centers, so this algorithm will output a valid solution.

Now, a point $i\in P$ is excluded from $Z$ only when there is some $s\in S$ such that $d(i,s)\leq NR(i)+NR(s)$, which is at most $2\cdot NR(i)$ by our choice of $s$. So when our algorithm terminates, $d(i,S)/NR(i)\leq 2$ holds for all $i\in P$. Thus, 2-fairness is achieved on $(P,k)$, and the algorithm is 2-fair.
\end{proof}

\subsection{Lower Bounds}\label{ssec:lowerbounds}
We now give four lower bounds on fairness: We prove that it is never possible to achieve better than 1/2-fairness on any instance; that no algorithm can be better than 1-fair, regardless of the metric space; that there exist metric spaces in which no algorithm can be better than 2-fair; and that no algorithm can be better than $\sqrt{2}$-fair in Euclidean spaces of dimension greater than 1. The first three of these results demonstrate the tightness of Example~\ref{ex:star}, Example~\ref{ex:discrete}, and Theorem~\ref{thm:general}, respectively.

\begin{proposition}\label{prop:instancelower}
	In every metric space $(X,d)$, for all $S\subseteq P\subseteq X$ and $1\leq k\leq |S|$, we have $\alpha_{P,k}(S)\geq 1/2$.
\end{proposition}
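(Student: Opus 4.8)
The plan is to produce, for an arbitrary solution $S$, a single witness point whose ratio of travel distance to neighborhood radius is at least $1/2$; since $\alpha_{P,k}(S)$ is the supremum of these ratios over $P$, one such witness suffices. The whole argument rests on a pigeonhole count followed by a single application of the triangle inequality.

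First I would assign every point of $P$ to a nearest center in $S$ (breaking ties arbitrarily). Since there are at most $k$ centers and $|P|$ points in total, by pigeonhole some center $s$ is the nearest center of a set $C_s$ of at least $|P|/k$ points. Next, let $i$ be a point of $C_s$ maximizing $d(i,s)$ and set $\rho=d(i,s)$. Because $s$ is a nearest center to $i$, we have $d(i,S)=\rho$, which controls the numerator. For the denominator, observe that every $j\in C_s$ satisfies $d(j,s)\le \rho$ by the choice of $i$, so the triangle inequality gives $d(j,i)\le d(j,s)+d(s,i)\le 2\rho$. Hence the closed ball $B_{2\rho}(i)$ contains all of $C_s$ and therefore at least $|P|/k$ points, which by definition of the neighborhood radius yields $NR_{P,k}(i)\le 2\rho$. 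Combining the two bounds gives $d(i,S)/NR_{P,k}(i)\ge \rho/(2\rho)=1/2$, so $\alpha_{P,k}(S)\ge 1/2$.

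The only subtleties I anticipate are the degenerate cases handled by the stated conventions. When $\rho=0$ the entire cluster $C_s$ coincides with $s$, forcing both $NR_{P,k}(i)=0$ and $d(i,S)=0$, so the ratio is $0/0=1\ge 1/2$; and if $NR_{P,k}(i)=0$ while $\rho>0$, the ratio is $\infty$. In either case the bound holds, so the conventions only strengthen the conclusion. The one genuinely creative step — and the place where care is needed — is recognizing that the \emph{farthest} point of the \emph{most populous} cluster is the correct witness: its distance to $s$ simultaneously lower-bounds its travel distance and, when doubled via the triangle inequality, upper-bounds its neighborhood radius, pinning the ratio at $1/2$.
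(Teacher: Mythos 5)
Your proof is correct and takes essentially the same approach as the paper's: a pigeonhole argument over the clusters induced by nearest centers to find a center $\hat{s}$ serving at least $|P|/k$ points, followed by choosing the farthest point of that cluster as the witness and applying the triangle inequality to bound its neighborhood radius by twice its travel distance. The only cosmetic differences are that you break ties to form a partition (the paper's sets $J(s)$ may overlap, which is harmless) and that you explicitly check the degenerate $0/0$ and $c/0$ cases, which the paper leaves to its stated conventions.
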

\begin{proof}
	For each center $s\in S$, define the set
	\[J(s)=\{i\in P:d(i,s)=d(i,S)\}\,,\]
	the set of points in $P$ for which $s$ is a closest center. There are at most $k$ centers in $S$, and $\bigcup_{s\in S}J(s)=P$, so there must be some center $\hat{s}$ with $J(\hat{s})\geq |P|/k$. Now, let \[\hat{i}\in\argmax_{i\in J(\hat{s})}d(i,\hat{s})\,.\]
	For each $j\in J(\hat{s})$, we know that $d(j,\hat{s})\leq d(\hat{i},\hat{s})$, so by the triangle inequality, $d(\hat{i},j)\leq 2d(\hat{i},\hat{s})$. Thus, $B_{2d(\hat{i},\hat{s})}(\hat{i})$ contains all of $J(\hat{s})$, meaning that
	\[\big|B_{2d(\hat{i},\hat{s})}\cap P\big|\geq |J(\hat{s})|\geq |P|/k\,,\]
	so $NR_{P,k}(\hat{i})\leq 2d(\hat{i},\hat{s})=2d(\hat{i},S)$. It follows that $\alpha_{P,k}(S)\geq 1/2$.
\end{proof}

Combining Theorem~\ref{thm:general} with Proposition~\ref{prop:instancelower} immediately yields the following.
\begin{corollary}
	\textsc{2FairKCenter} is a 4-approximation algorithm.
\end{corollary}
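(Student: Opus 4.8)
The plan is to bound the ratio between the fairness value achieved by \textsc{2FairKCenter} and the optimal fairness value $\alpha^*_{P,k}$ on an arbitrary instance $(P,k)$, and to show that this ratio never exceeds $4$. First I would recall that, by definition, $\alpha^*_{P,k}=\min_{|S|\leq k}\alpha_{P,k}(S)$ is the objective value attained by a best feasible solution $S^*$, so that establishing a 4-approximation amounts to showing $\alpha_{P,k}(A(P,k))\leq 4\,\alpha^*_{P,k}$, where $A$ denotes \textsc{2FairKCenter}.

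Next I would simply combine the two preceding results. On the one hand, Theorem~\ref{thm:general} guarantees that the solution $S=A(P,k)$ returned by \textsc{2FairKCenter} satisfies $\alpha_{P,k}(S)\leq 2$. On the other hand, since the optimal solution $S^*$ is itself a feasible set of at most $k$ centers, Proposition~\ref{prop:instancelower} applies to it and yields $\alpha^*_{P,k}=\alpha_{P,k}(S^*)\geq 1/2$. Dividing these two bounds gives $\alpha_{P,k}(S)/\alpha^*_{P,k}\leq 2/(1/2)=4$, which is exactly the claim.

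There is essentially no obstacle here, as the word ``immediately'' in the lead-in suggests; the only point requiring a moment's care is confirming that Proposition~\ref{prop:instancelower} legitimately applies to $S^*$, i.e., that the lower bound of $1/2$ holds uniformly over \emph{every} feasible solution and hence for the optimum in particular. Once that is noted, the corollary follows directly from the stated combination of the upper bound of $2$ and the lower bound of $1/2$.
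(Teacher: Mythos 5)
Your proof is correct and matches the paper's argument exactly: the paper likewise combines the upper bound $\alpha_{P,k}(S)\leq 2$ from Theorem~\ref{thm:general} with the universal lower bound $\alpha_{P,k}(S')\geq 1/2$ of Proposition~\ref{prop:instancelower} (applied in particular to an optimal solution) to conclude the ratio is at most $4$. Your added remark about checking that the lower bound applies to \emph{every} feasible solution, hence to the optimum, is precisely the implicit content of the paper's ``immediately.''
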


\begin{proposition}
	For all metric spaces $(X,d)$ and all $\alpha<1$, there is no centers algorithm that is $\alpha$-fair in $(X,d)$.
\end{proposition}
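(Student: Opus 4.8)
The plan is to reduce this universal statement to a single per-instance lower bound: rather than argue about arbitrary algorithms directly, I would exhibit, for every metric space $(X,d)$, one instance $(P,k)$ whose optimal fairness value satisfies $\alpha^*_{P,k}\geq 1$. Since any algorithm $A$ that is $\alpha$-fair in $(X,d)$ must in particular achieve $\alpha_{P,k}(A(P,k))\leq\alpha$ on that instance, producing an instance with $\alpha^*_{P,k}\geq 1 > \alpha$ simultaneously defeats every algorithm when $\alpha<1$. So the whole burden of the proof lies in constructing one bad instance per metric space.

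The construction I would use is the smallest nontrivial one. Set $k=1$ and take $P=\{a,b\}$ to be two distinct points, writing $\delta=d(a,b)>0$. Then $|P|/k=2$, and since a ball around either point must include the other point before it contains $2\geq|P|/k$ points, the neighborhood radii are exactly $NR(a)=NR(b)=\delta$. With $k=1$ a centers algorithm must return a single center drawn from $P$, so $S=\{a\}$ or $S=\{b\}$ (an empty $S$ makes $d(\cdot,S)$ infinite and is only worse). By symmetry suppose $S=\{a\}$. The non-center point then satisfies $d(b,S)=\delta=NR(b)$, so $\frac{d(b,S)}{NR(b)}=1$ and hence $\alpha_{P,1}(S)\geq 1$; the symmetric choice $S=\{b\}$ gives the same conclusion. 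Therefore $\alpha^*_{P,1}=1$ on this instance, which is what we need.

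The one loose end, which I would dispatch in a sentence, is the degenerate metric space with no two distinct points, i.e.\ $|X|=1$: there every instance has $NR(i)=d(i,S)=0$, yielding the ratio $0/0=1$ under the stated conventions, so $\alpha^*=1$ there as well. I do not anticipate a real obstacle here; the only care required is counting the neighborhood radius correctly (the ball of radius just below $\delta$ holds only one of the two points) and invoking the $0/0=1$ convention in the degenerate case. It is worth noting that, as indicated in the surrounding text, this bound certifies the tightness of Example~\ref{ex:discrete}: $1$-fairness is achievable in the discrete metric, and the present argument shows that $1$ cannot be beaten in any metric space.
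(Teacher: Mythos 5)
Your proof is correct, but it takes a genuinely different route from the paper's. The paper's proof picks any population $P$ and sets $k=|P|$: then every point has neighborhood radius $0$, so any algorithm's output has ratio $0/0=1$ (if it places a center at every point) or $c/0=\infty$ (otherwise); the entire lower bound thus rests on the degenerate-division conventions. Your construction instead takes $k=1$ and $P=\{a,b\}$ with $d(a,b)=\delta>0$, where both neighborhood radii equal $\delta$ and whichever point is not chosen as the center sits at distance exactly $\delta=NR(b)$ from it, forcing ratio $1$ through genuinely positive distances. What each approach buys: the paper's argument is shorter and uniform over all (nonempty) metric spaces, including the one-point space, at the cost of leaning entirely on the $0/0=1$ and $c/0=\infty$ conventions; yours exhibits a non-degenerate geometric obstruction --- the bound of $1$ arises from actual travel distance matching actual neighborhood radius --- but requires at least two distinct points, so you need the separate (and correctly handled) case $|X|=1$, where you end up invoking the same $0/0=1$ convention anyway. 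Your closing observation that the instance also certifies tightness of Example~\ref{ex:discrete} matches the role the paper assigns to this proposition. One pedantic remark that applies equally to both proofs: if one admits the empty metric space, the statement is vacuous there, so both arguments implicitly assume $X\neq\emptyset$.
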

\begin{proof}
Suppose the number of available centers is the same as the size of the population: $k=|P|$. Then $NR_{P,k}(x)=0$ for all $x$, so $\alpha_{P,k}(A(P,k))$ is $1$ if $A$ places a center at every point in $P$ and $\infty$ otherwise.
\end{proof}

\begin{proposition}\label{prop:generallower}
	There exists a metric space $(X,d)$ such that for all $\alpha<2$, there is no centers algorithm that is $\alpha$-fair in $(X,d)$.
\end{proposition}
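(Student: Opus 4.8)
The plan is to exhibit a single instance $(P,k)$, in a suitable graph metric, whose optimal fairness value $\alpha^*_{P,k}$ equals $2$. Since \textsc{2FairKCenter} already guarantees $\alpha\le 2$ (Theorem~\ref{thm:general}), producing such an instance shows that no centers algorithm can beat $2$ in that space: any algorithm returns a set of value $\ge 2$ on this instance, hence fails $\alpha$-fairness for every $\alpha<2$. Concretely, I would let $X$ be the vertex set of a finite graph $G$ under its shortest-path metric, take $P=X$ with one point per vertex, and choose $k$ so that two conditions hold simultaneously: (i) the neighborhood radius is exactly $1$ at every vertex, and (ii) no set of $k$ vertices dominates $G$. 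Condition (i) forces $NR\equiv 1$, and condition (ii) forces, for every admissible center set $S$, some vertex $i$ at graph-distance $\ge 2$ from all of $S$, so that $d(i,S)/NR(i)\ge 2$.

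To make (i) precise, with one point per vertex we have $NR(v)=1$ exactly when $1=|B_0(v)\cap P|<|P|/k\le|B_1(v)\cap P|=1+\deg(v)$. Thus I need $1<n/k\le 1+\delta$, where $\delta$ is the minimum degree; equivalently $n/(1+\delta)\le k<n$. Condition (ii) is $\gamma(G)>k$, where $\gamma$ denotes the domination number. Both can hold only if $n/(1+\delta)\le\gamma(G)-1$, i.e. the domination number must exceed the trivial lower bound $n/(1+\delta)$ by a full unit. Reconciling these two requirements is the crux of the construction: a small minimum degree makes $NR\equiv 1$ demand a large $k$, while a large $k$ makes domination easy. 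I therefore want a graph that is regular (uniformly moderate degree) yet has domination number strictly above $n/(1+\delta)$.

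A concrete choice that threads this needle is the Wagner graph $M_8$: the $8$-cycle $0,1,\dots,7$ together with the four ``diameters'' $\{i,i+4\}$. It is $3$-regular, so $|B_1(v)|=4$ for every $v$; taking $k=2$ gives $n/k=8/2=4$, whence $NR(v)=1$ for all $v$, establishing (i). For (ii) I would verify by a short finite check that $\gamma(M_8)=3>2$: since every closed neighborhood has size $4$, a dominating set of size $2$ would have to partition the $8$ vertices into two disjoint closed neighborhoods, and one checks that no vertex's closed neighborhood is the complement of another's, so no such partition exists. (A dominating set of size $3$, e.g.\ $\{0,2,5\}$, does exist, and $M_8$ has diameter $2$.)

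Finishing the argument is then immediate: for any $S\subseteq P$ with $|S|\le 2$, condition (ii) yields a vertex $i$ with $d(i,S)\ge 2$, and since $NR(i)=1$ this gives $\alpha_{P,k}(S)\ge 2$; hence $\alpha^*_{P,k}\ge 2$, while Theorem~\ref{thm:general} supplies the matching upper bound $\alpha^*_{P,k}\le 2$. Consequently, on the graph metric of $M_8$, every centers algorithm has fairness value exactly $2$ on this instance, so for each $\alpha<2$ no algorithm is $\alpha$-fair. I expect the only genuine computation to be the domination-number verification, and the main conceptual obstacle---already resolved by the regularity of $M_8$---to be producing a graph in which ``neighborhoods of the right size'' coexist with ``too few centers to cover everyone,'' a tension that is impossible to create on a line and is precisely why the witnessing space must be genuinely non-linear.
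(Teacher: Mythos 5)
Your proof is correct, and its skeleton matches the paper's: exhibit a graph-metric instance in which every point has neighborhood radius exactly $1$ yet no admissible set of $k$ centers dominates the graph, forcing some point to travel distance $\geq 2$. The difference lies in the witness. The paper takes $12$ points forming three disjoint unit squares with $k=4$; since $|P|/k=3$ and each square vertex has two neighbors at distance $1$, $NR\equiv 1$, and a pigeonhole argument (three squares cannot each get two of the four centers) yields a vertex at distance $\geq 2$ from its square's lone center. You instead take the Wagner graph $M_8$ with $k=2$, where $3$-regularity gives $NR\equiv 1$, and the key fact is the domination-number computation $\gamma(M_8)=3>2$, which you verify correctly via vertex-transitivity and the observation that the complement of $N[0]=\{0,1,4,7\}$ is not a closed neighborhood. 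Each version buys something: the paper's pigeonhole check is lighter and transfers directly to its Euclidean $\sqrt{2}$ lower bound (Proposition~\ref{prop:euclideanlower}), where the same three squares are embedded in the plane; your example is smaller ($8$ points, $k=2$), isolates the clean general principle that $NR\equiv 1$ together with $\gamma(G)>k$ suffices, and --- a genuine technical advantage --- uses a \emph{connected} graph, whereas the paper's three disjoint squares have infinite inter-component distances under the graph metric, which strictly speaking requires either an extended metric or an additional patch (such as joining the squares by long paths) that the paper leaves implicit.
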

\begin{figure}[h]
	\begin{centering}
	\begin{tikzpicture}
	[every path/.style = {draw,thick},every node/.style = {draw,fill,circle,thick,inner sep=0pt,minimum size=5pt}]
		\node (a) at (0,0) {};
		\node (b) at (1,0) {};
		\node (c) at (3,0) {};
		\node (d) at (4,0) {};
		\node (e) at (6,0) {};
		\node (f) at (7,0) {};
		\node (g) at (0,1) {};
		\node (h) at (1,1) {};
		\node (i) at (3,1) {};
		\node (j) at (4,1) {};
		\node (k) at (6,1) {};
		\node (l) at (7,1) {};
		\path (a)--(b)--(h)--(g)--(a);
		\path (c)--(d)--(j)--(i)--(c);
		\path (e)--(f)--(l)--(k)--(e);	
	\end{tikzpicture}
	\caption{Under a graph metric, it is impossible to do better than 2-fairness when choosing four centers from among this set of points.}\label{fig:squares}

	\end{centering}
\end{figure}
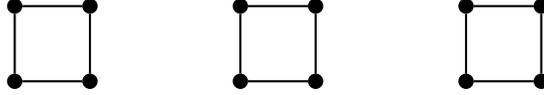
\begin{proof}
	Consider the example in Figure~\ref{fig:squares}, with 12 points under a graph metric with $k=4$ and $P=X$. For every point $i\in P$, we have $NR(i)=1$. But for any choice of three centers, some square will have at most one center, and one point in that square will therefore have travel distance at least 2. Thus, no centers algorithm in this metric space can be $\alpha$-fair for any $\alpha<2$.
\end{proof}

\begin{proposition}\label{prop:euclideanlower}
	For all $m\geq 2$ and all $\alpha<\sqrt{2}$, there is no centers algorithm that $\alpha$-fair in $m$-dimensional Euclidean space.
\end{proposition}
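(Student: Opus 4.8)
The plan is to build a Euclidean analogue of the ``squares'' gadget used in Proposition~\ref{prop:generallower}, replacing each graph-metric $4$-cycle by the four corners of a unit square in the plane. The key geometric fact is that in a unit square the two edge-neighbors of a corner lie at distance $1$ while the diagonally opposite corner lies at distance $\sqrt2$; so if a point's neighborhood radius can be pinned at exactly $1$, then a single center at any one corner leaves the opposite corner stranded at distance $\sqrt2$, yielding a ratio of $\sqrt2$. Since the whole construction lives in $\mathbb{R}^2$, it embeds isometrically into $\mathbb{R}^m$ for every $m\ge2$, so it suffices to work in the plane.

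Concretely, I would take $t$ unit squares placed far apart (pairwise distance $\ge 10$, say), and at each of the $4t$ corners place a cluster of $g$ coincident points, which is permitted since $P$ need not consist of distinct points. This gives $n=4gt$ points. Choosing $k$ so that $2g < n/k \le 3g$ makes the neighborhood radius exactly $1$ at every point: the ball of radius just under $1$ around a corner contains only that corner's own cluster ($g<n/k$ points), while the ball of radius $1$ also captures its two edge-neighbors (for a total of $3g\ge n/k$ points), and the diagonal cluster (distance $\sqrt2$) and all other squares (distance $\ge 10$) lie strictly outside. For instance, taking $t=3$ and $k=5$ gives $n/k=2.4g$ and hence $NR(i)=1$ for every $i\in P$.

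The counting step is then a pigeonhole argument. To keep every point within distance strictly less than $\sqrt2$ of some center, each square needs at least two centers: one center placed at a corner covers that corner and its two neighbors within distance $1$, but leaves the opposite corner at distance exactly $\sqrt2$, so only a second center (necessarily at a point of $P$, i.e.\ at another corner) can bring the fourth corner inside distance $<\sqrt2$. Covering all $t$ squares in this way therefore requires $2t$ centers. Since $k<2t$ (for example $5<6$ when $t=3$), any admissible $S$ leaves some square with at most one center, and a cluster in that square is then at distance at least $\sqrt2$ from the nearest center, all centers in other squares being $\ge10$ away. For those points $d(i,S)/NR(i)\ge\sqrt2/1=\sqrt2$, so $\alpha_{P,k}(S)\ge\sqrt2$ for every $S$ with $|S|\le k$, which is the claim.

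The main obstacle is calibrating the instance so that $NR$ equals the edge length $1$ rather than the diagonal $\sqrt2$. A single square will not do: with only one permitted center and $n/k=4g$, the neighborhood radius jumps to $\sqrt2$, since a corner must reach all four clusters to meet its quota, and the ratio collapses to $1$. Using several well-separated squares decouples the quota $n/k\approx 3g$, which controls $NR$, from the global scarcity $k<2t$, which controls coverage; this separation is exactly what forces the $\sqrt2$ gap. I would also flag that the bound genuinely relies on the non-Steiner restriction $S\subseteq P$: an arbitrary Steiner center at a square's centroid would be within distance $\sqrt2/2<1$ of all four corners, defeating the construction.
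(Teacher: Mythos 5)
Your proof is correct and takes essentially the same approach as the paper: the paper likewise uses three well-separated unit squares (12 points, $k=4$) embedded in the plane, notes that every point has neighborhood radius $1$, and applies pigeonhole so that some square receives at most one center, stranding a corner of that square at distance at least $\sqrt{2}$. Your clusters of $g$ coincident points and the parameters $t=3$, $k=5$ are inessential variations of the paper's construction.
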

\begin{proof}
	This holds by essentially the same example used to prove Proposition~\ref{prop:generallower}: $P$ consists of 12 points arranged in three squares of unit side, where the distance between the squares is greater than the diameter of the squares, and $k=4$. Once again, every point has neighborhood radius 1, and for any solution $S$, some square will have at most one center. Some other point $i$ in that square will have travel distance $d(i,S)\geq\sqrt{2}$, and it follows immediately that $\alpha(S)\geq \sqrt{2}$.
\end{proof}

\subsection{NP-Completeness}
The $k$-center, $k$-medians, and $k$-means problems are all known to be NP-hard, and the problem of checking, for a given instance, whether a given value of the objective function is achievable, is NP-complete~\cite{DrineasFKVV04, MegiddoS84, Hochbaum84}. We now show that the same is true for our objective function $\alpha$.
\begin{theorem}\label{thm:npcomplete}
	The problem of determining whether $1$-fairness can be achieved on a given instance is NP-complete.
\end{theorem}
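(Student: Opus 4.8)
The plan is to prove NP-completeness by first establishing membership in NP and then giving a polynomial-time reduction from a known NP-complete problem. Membership is the easy direction: given a candidate solution $S$ with $|S|\leq k$, one can compute all neighborhood radii $NR_{P,k}(i)$ and all travel distances $d(i,S)$ in polynomial time, and then check whether $d(i,S)\leq NR_{P,k}(i)$ holds for every $i\in P$, which is exactly the condition $\alpha_{P,k}(S)\leq 1$.

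For hardness, the natural target is a reduction from \textsc{Dominating Set} (or equivalently \textsc{Vertex Cover} / \textsc{Set Cover}), since the structure of Fair $k$-Center — ``can we place $k$ centers so that every point has a center within a prescribed radius?'' — closely resembles covering problems. My plan would be to encode an arbitrary graph $G=(V,E)$ as a metric space, using a graph metric (shortest-path distances) on a point set built from $V$, and to calibrate the construction so that the neighborhood radius of each point equals exactly $1$. This is the crux: $1$-fairness requires that every point $i$ have some center within distance $NR(i)$, so if we force $NR(i)=1$ uniformly, then $1$-fairness becomes precisely the requirement that the chosen centers form a dominating set in the underlying unit-distance graph. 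Achieving $NR(i)=1$ everywhere means ensuring that each point has at least $\lceil|P|/k\rceil-1$ other points at distance exactly $1$ (i.e., enough neighbors) but its $\lceil|P|/k\rceil$\textsuperscript{th} nearest neighbor is not closer; this will require padding each vertex with an appropriate number of auxiliary ``satellite'' points at distance $1$, so that the counting works out and the neighborhood radius is pinned to $1$ for both original and auxiliary points.

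The main technical work, and the step I expect to be the main obstacle, is designing the gadget so that three constraints hold simultaneously: (i) $NR(i)=1$ for \emph{every} point in the padded instance, (ii) a set of $k$ centers achieves $1$-fairness if and only if $G$ has a dominating set of size $k$, and (iii) the auxiliary points do not create spurious covering opportunities that would let a non-dominating set succeed. The delicate part is that the padding points must themselves be covered within radius $1$, so I would attach each satellite only to a single original vertex (at distance $1$), guaranteeing that covering a satellite is equivalent to placing a center at or adjacent to its host vertex, and I would choose the number of satellites and the value of $k$ relative to $|P|$ so that $\lceil|P|/k\rceil$ lands at the value forcing $NR\equiv 1$. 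One must also verify that distances of $2$ (between vertices not adjacent in $G$, or between a satellite and a non-host vertex) are genuinely realized by the graph metric and are strictly larger than $1$, so that a center fails to cover a point exactly when the corresponding vertex fails to dominate it.

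Putting these together, a ``yes'' instance of \textsc{Dominating Set} (size $\leq k$) yields $k$ centers covering every point within radius $1$, hence $\alpha\leq 1$; conversely, any $1$-fair solution must place its centers so that every original vertex and every satellite is within distance $1$ of a center, which forces the center set to project onto a dominating set of $G$ of size at most $k$. Since the reduction is clearly polynomial-time and \textsc{Dominating Set} is NP-complete, this establishes NP-hardness, and combined with membership in NP we obtain NP-completeness.
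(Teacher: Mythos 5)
Your NP membership argument is fine, and Dominating Set is the right source problem --- it is the one the paper itself uses. But the core of your reduction has a genuine gap: your three design constraints (i)--(iii) are jointly unsatisfiable, and the specific equivalence you rely on is false. You insist that $NR(i)=1$ for \emph{every} point, including the satellites, and that each satellite be attached only to its host. Under the graph metric, a satellite $s$ attached only to host $u$ then has a center within $NR(s)=1$ only if that center is at $s$ itself or at $u$; a center at a vertex \emph{adjacent} to $u$ is at distance $2$ from $s$, so your claim that ``covering a satellite is equivalent to placing a center at or adjacent to its host vertex'' does not hold. Consequently any $1$-fair solution must contain, for every vertex $u$, either $u$ or one of its satellites; these option sets are pairwise disjoint across vertices, so at least $n$ centers are forced, and the constructed instance is trivial (yes if and only if the budget covers one center per vertex plus the padding) rather than encoding domination. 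There is also a counting-level obstruction: a degree-one satellite can have $NR=1$ only if $\lceil |P|/k\rceil\leq 2$, i.e., each point needs just one other point within its radius; with that threshold, the calibration forces Dominating Set instances in which $k$ is at least roughly $n/2$, and such instances are always yes-instances for graphs without isolated vertices, so no hardness can be extracted.

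The paper's construction shows how to escape this, and the fix is precisely an asymmetry you ruled out: it deliberately does \emph{not} pin $NR=1$ on the satellites. Each vertex $u$ gets \emph{two} pendant satellites $u_1,u_2$, and the budget $k'=3n-2k$ is calibrated (together with padding) so that $|P|/k'=3$ exactly. Then every original vertex has $NR=1$ --- the two satellites guarantee this even for isolated vertices of $G$ --- while each satellite has $NR=2$: it is covered whenever some center lies within distance $2$ of it, which happens automatically once its host is dominated. So the satellites impose no covering constraints beyond domination of $U$; their only job is to pin $NR(u)=1$. The remaining padding, $\frac{3}{2}(n-k)$ disjoint $4$-cycles whose points all have $NR=1$, provably absorbs at least two centers per cycle, so that at most $k$ centers remain for $U\cup V$, and replacing any chosen satellite by its host yields a dominating set of size at most $k$. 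If you want to repair your proof, the key missing idea is that the gadget points which pin the neighborhood radii of others must themselves be coverable ``for free,'' which requires granting them a strictly larger neighborhood radius, not the uniform radius $1$ you demand.
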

\begin{proof}
	This problem is a special case of the set cover problem, where the sets are
	\[S_i=\left\{j\in P:d(i,j)\leq \alpha\cdot NR_{P,k}(j)\right\}\]
	for each $i\in P$, so it belongs to NP.

	We prove NP-hardness by reduction from the dominating set problem. Let $G$ be a graph on a set $U$ of $n$ vertices, and let $1\leq k\leq n$; without loss of generality, we assume that $n-k$ is even. We construct a new graph $G'$ that contains $G$ as a subgraph and also has the following:
	\begin{itemize}
		\item a set $V$ of $2n$ vertices with degree 1 such that each vertex $u\in U$ is adjacent to two vertices $u_1,u_2\in V$, and
		\item a set $W$ of $6n-6k$ vertices arranged as $\frac{3}{2}(n-k)$ disjoint 4-cycles.
	\end{itemize}
	Letting $P=U\cup V\cup W$ and $k'=3n-2k$, we will show that $G$ has a dominating set of size $k$ if and only if there is a set $S\subseteq P$ with $|S|=k'$ and $\alpha_{P,k'}(S)\leq 1$. Since $(G',k')$ can be efficiently computed from $(G,k)$, this will suffice to prove the theorem.

	Suppose that $G$ has a dominating set $D$ of size $k$, and consider the set of centers $S=D\cup T$, where $T$ is a set consisting of two vertices from each of the squares in $W$, so that $d(w,S)=d(w,T)\leq 1$ for all $w\in W$. Now,
	\[\frac{|P|}{k'}=\frac{n+2n+6n-6k}{3n-2k}=3\,,\]
	so $NR_{P,k'}(w)=1$ for each $w\in W$. Each $u\in U$ has at least two neighbors --- namely, $u_1$ and $u_2$ --- so we also have $NR_{P,k'}(u)=1$ for all $u\in U$, and it follows immediately that $NR_{P,k'}(v)=2$ for each $v\in V$. The fact that $D\subseteq S$ is a dominating set means that $d(u,S)\leq 1$ for each $u\in U$ and therefore that $d(v,S)\leq 2$ for each $v\in V$. Thus, $\alpha_{P,k'}(S)=1$.

	Conversely, suppose that there is some set $S\subseteq P$ with $|S|=k'$ and $\alpha_{P,k'}(S)\leq 1$. Then $S$ must contain at least two vertices from each square in $W$, so letting $Y=S\cap(U\cup V)$, we have
	\[|Y|\leq k'-(3n-3k)=k\,.\]
	For each $u\in U$, we must have
	\[d(u,Y)=d(u,S)\leq NR_{P,k'}(u)=1\,.\]
	We construct a set $D$ by taking each vertex in $Y\cap V$ and replacing it with the adjacent vertex in $U$. Then $|D|\leq |Y|\leq k$, and for each $u\in U$, $d(u,D)\leq d(u,Y)$, meaning that $D$ is a dominating set for $G$. We conclude that this problem is NP-complete.
\end{proof}

\section{Experiments}
In this section we measure the fairness of three standard clustering algorithms on two geographic data sets, and we compare their performance to that of a modified version of \textsc{2FairKCenter} that still guarantees 2-fairness but attempts to be even fairer.
\subsection{A Heuristic Refinement of \textsc{2FairKCenter}}
Our algorithm \textsc{2FairKCenter} always yields a 2-fair solution, but this solution might be less than optimal and use fewer than $k$ centers. To avoid this situation, we introduce \textsc{AlphaFairKCenter}, a version of \textsc{2FairKCenter} that is parameterized by a fairness guarantee parameter, $\alpha$. This algorithm achieves $\alpha$-fairness on every instance by essentially the same argument we used to prove that Algorithm~\ref{alg:general} is 2-fair. The catch is that the output will not necessarily be a valid solution: For $\alpha<2$, Algorithm~\ref{alg:parameterized} may select more than $k$ centers on a given instance $(P,k)$.

\begin{algorithm2e}
\SetAlgoLined
$Z=P$\\
$S=\emptyset$\\
\While{$S\neq\emptyset$}{
	choose $s\in\argmin_{i\in Z}NR_{P,k}(i)$\\
	$S=S\cup\{s\}$\\
	$Z=\{i\in Z:d(i,s)>\alpha\cdot NR_{P,k}(i)\}$	
}
\Return $S$
\caption{$\textsc{AlphaFairKCenter}(\alpha,P,k)$}\label{alg:parameterized}
\end{algorithm2e}

\begin{figure}[h]
	\begin{subfigure}{0.48\textwidth}
		\includegraphics[width=\linewidth]{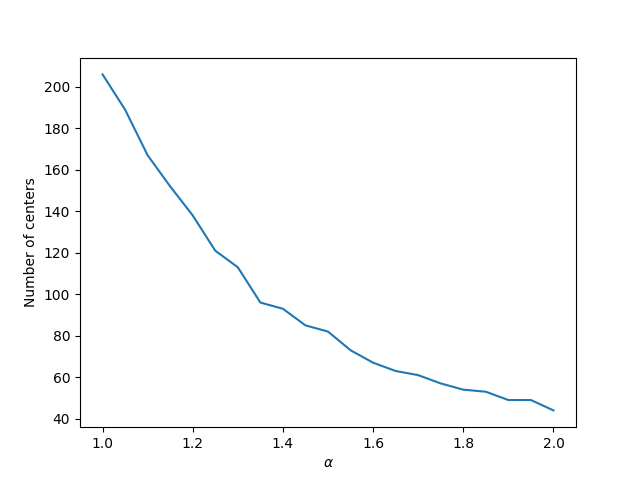}
		\caption{Fairfax County, Virginia}
	\end{subfigure}
	\hfill
	\begin{subfigure}{0.48\textwidth}
		\includegraphics[width=\linewidth]{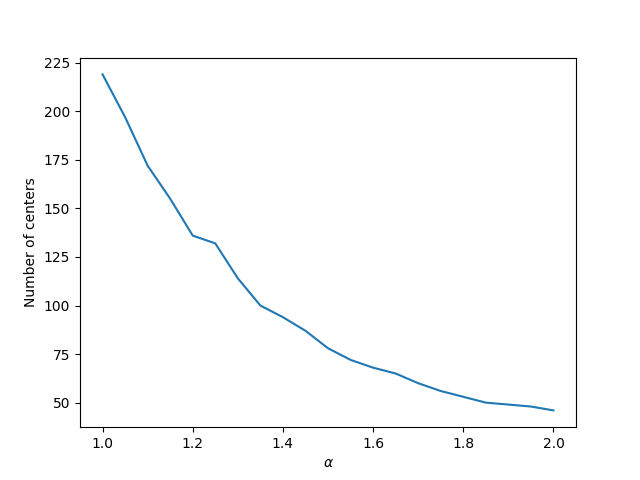}
		\caption{Allegheny County, Pennsylvania}
	\end{subfigure}
	\caption{The number of centers chosen by \textsc{AlphaFairKCenter} for different values of $\alpha$ with $k=100$ given address points in two counties.}\label{fig:monotonic}
\end{figure}

For each instance $(P,k)$, we define a function $f_{P,k}:[1,2]\to\mathbb{N}$ by \[f_{P,k}(\alpha)=\left|\textsc{AlphaFairKCenter}(\alpha,P,k)\right|\,,\]
the number of centers chosen by \textsc{AlphaFairKCenter} with parameter $\alpha$ on instance $(P,k)$. Our goal is to find a small $\alpha$ such that $f_{P,k}(\alpha)\leq k$. In order to do this, we will perform a binary search on the interval $[1,2]$, recursively searching the lower half of the interval when $f_{P,k}(\alpha)> k$ and the upper half of the interval otherwise.

If $f_{P,k}$ is a monotonic function, then this search will find
\[\inf \{\alpha\in[1,2]:f_{P,k}(\alpha)\leq k\}\]
up to arbitrary precision. Intuitively, $f_{P,k}$ has a general tendency to be decreasing --- a weaker fairness guarantee requires fewer centers --- but in fact $f_{P,k}$ is not necessarily monotonic, and local extrema may cause our search to select a larger $\alpha$ than is necessary.

Fortunately, as shown in Figure~\ref{fig:monotonic}, $f_{P,k}$ seems to behave monotonically at coarse scales on real data. Furthermore, deviations from monotonicity cannot affect the validity of the solution we find, only its optimality. Hence, this binary search appears to be a useful heuristic, and we employ it in our algorithm $\textsc{FairKCenter}$. In addition to an instance $(P,k)$, this algorithm takes as input a precision parameter $t$ that determines the depth of the binary search.

\begin{algorithm2e}[h]
$low = 1$\\
$high = 2$\\
\For{$i=1,2,\ldots,t$}{
	$mid = (low + high)/2$\\
	\If{$|\textsc{AlphaFairKCenter}(mid,P,k)|\leq k$}{
		$high = mid$\\
	}
	\Else{$low = mid$}
}
\Return $\textsc{AlphaFairKCenter}(high,P,k)$
\caption{$\textsc{FairKCenter}(t,P,k)$}\label{alg:binarysearch}
\end{algorithm2e}

\subsection{Experimental Setup}

We applied our algorithm \textsc{FairKCenter} to select 100 center locations in two American counties: Fairfax County, Virginia, and Allegheny County, Pennsylvania. Fairfax County is located near Washington, D.C., and is primarily suburban. According to the 2010 United States Census~\cite{Census}, its population density is 1068 people per square kilometer, with census tracts ranging in density from 56 to 23,397 people per square kilometer. Allegheny County contains the city of Pittsburgh as well as many of its suburbs and exurbs; in the 2010 Census, the county's population density was 647 people per square kilometer, with census tracts ranging in density from 48 to 12,474 people per square kilometer.\footnotemark

\footnotetext{The stated ranges of population density exclude the few census tracts with fewer than 100 people. Some census tracts are uninhabited.}

The ``populations'' for our experiment were the sets of all address points in each county, not the locations of individual people. The Fairfax data set contains 537,514 address points, and the Allegheny data set contains 370,776. The data sets were published by Fairfax County GIS and the Allegheny County / City of Pittsburgh / Western PA Regional Data Center, respectively~\cite{FairfaxData,AlleghenyData}. We measured Euclidean distance after projecting (latitude, longitude) pairs onto the plane using the Universal Transverse Mercator coordinate system.

The bottleneck for our algorithm in terms of running time is calculating the neighborhood radius for each point. In order to accelerate this process, we used the Python library \href{https://scikit-learn.org/stable/modules/generated/sklearn.neighbors.KDTree.html}{KD-tree}~\cite{bentley1975multidimensional}. The KD-tree data structure allows us to quickly query the distance to any point's $(\lceil n/k\rceil - 1)$\textsuperscript{th} nearest neighbor, which is exactly the definition of the neighborhood radius at that point.

\begin{figure*}[p]
\begin{subfigure}{0.48\textwidth}
\includegraphics[width=\linewidth]{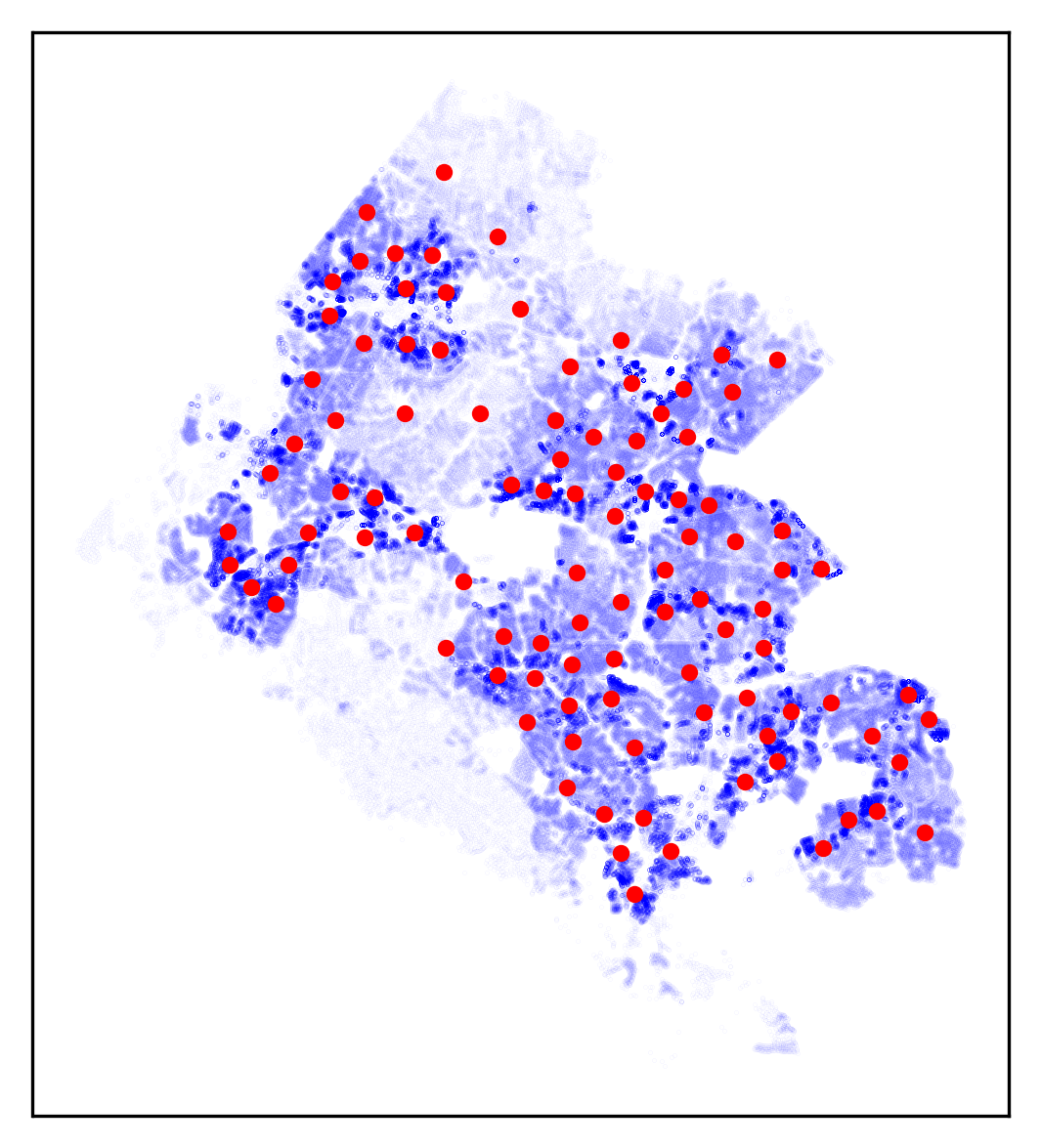}
\caption{\textsc{FairKCenter}}
\end{subfigure}
\hfill
\begin{subfigure}{0.48\textwidth}
\includegraphics[width=\linewidth]{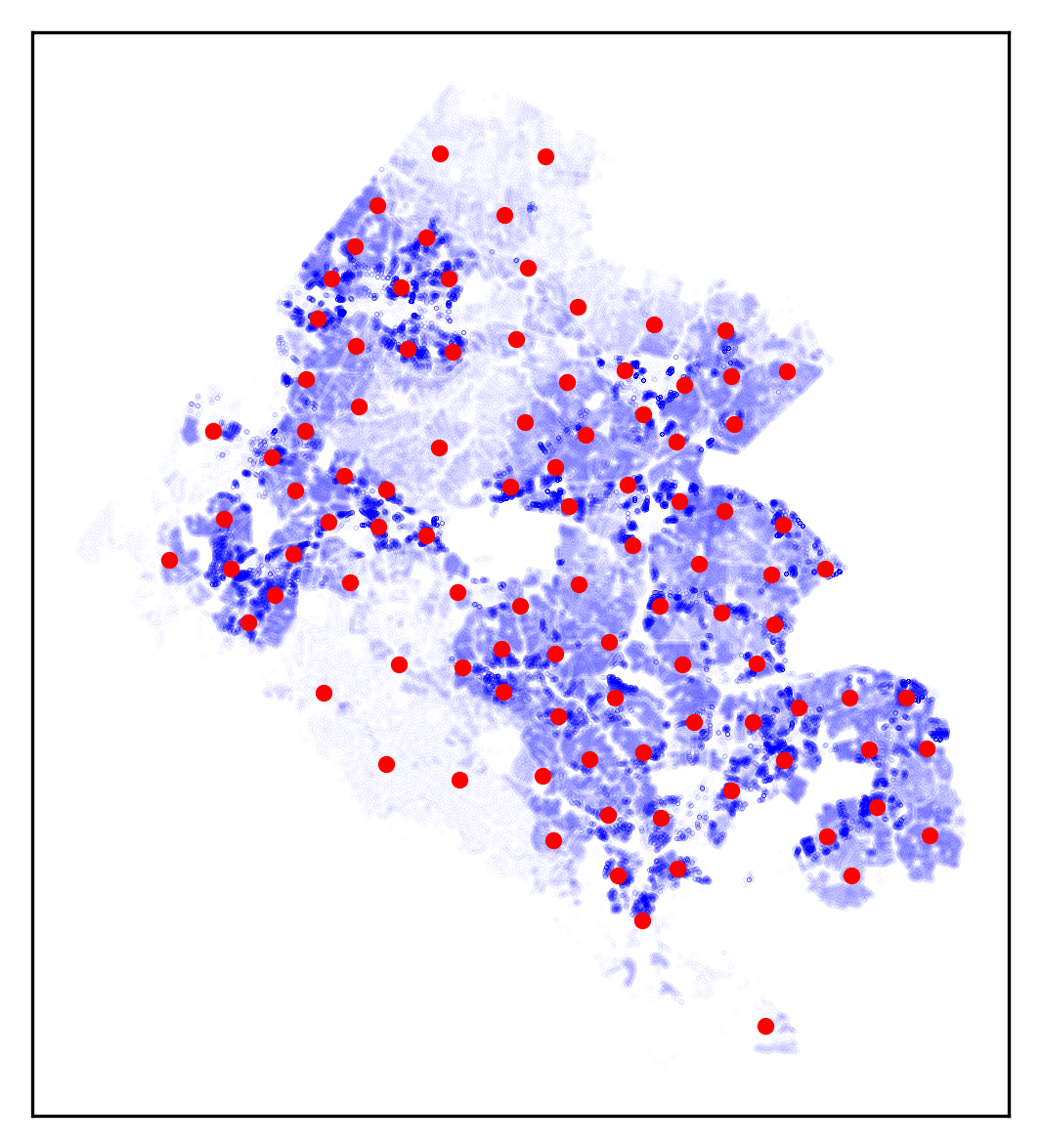}
\caption{$k$-means}
\end{subfigure}

\bigskip
\begin{subfigure}{0.48\textwidth}
\includegraphics[width=\linewidth]{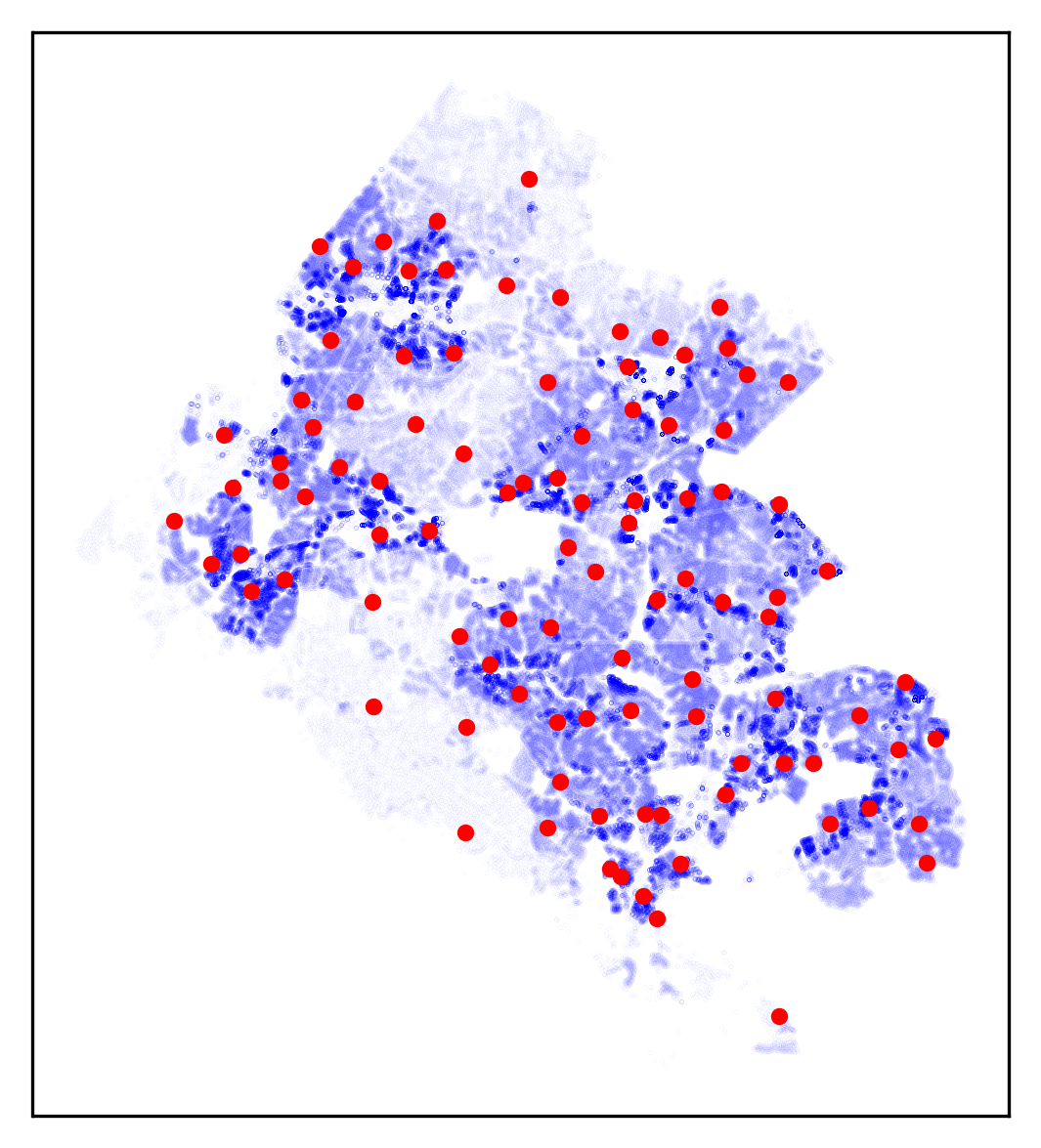}
\caption{$k$-medians}
\end{subfigure}
\hfill
\begin{subfigure}{0.48\textwidth}
\includegraphics[width=\linewidth]{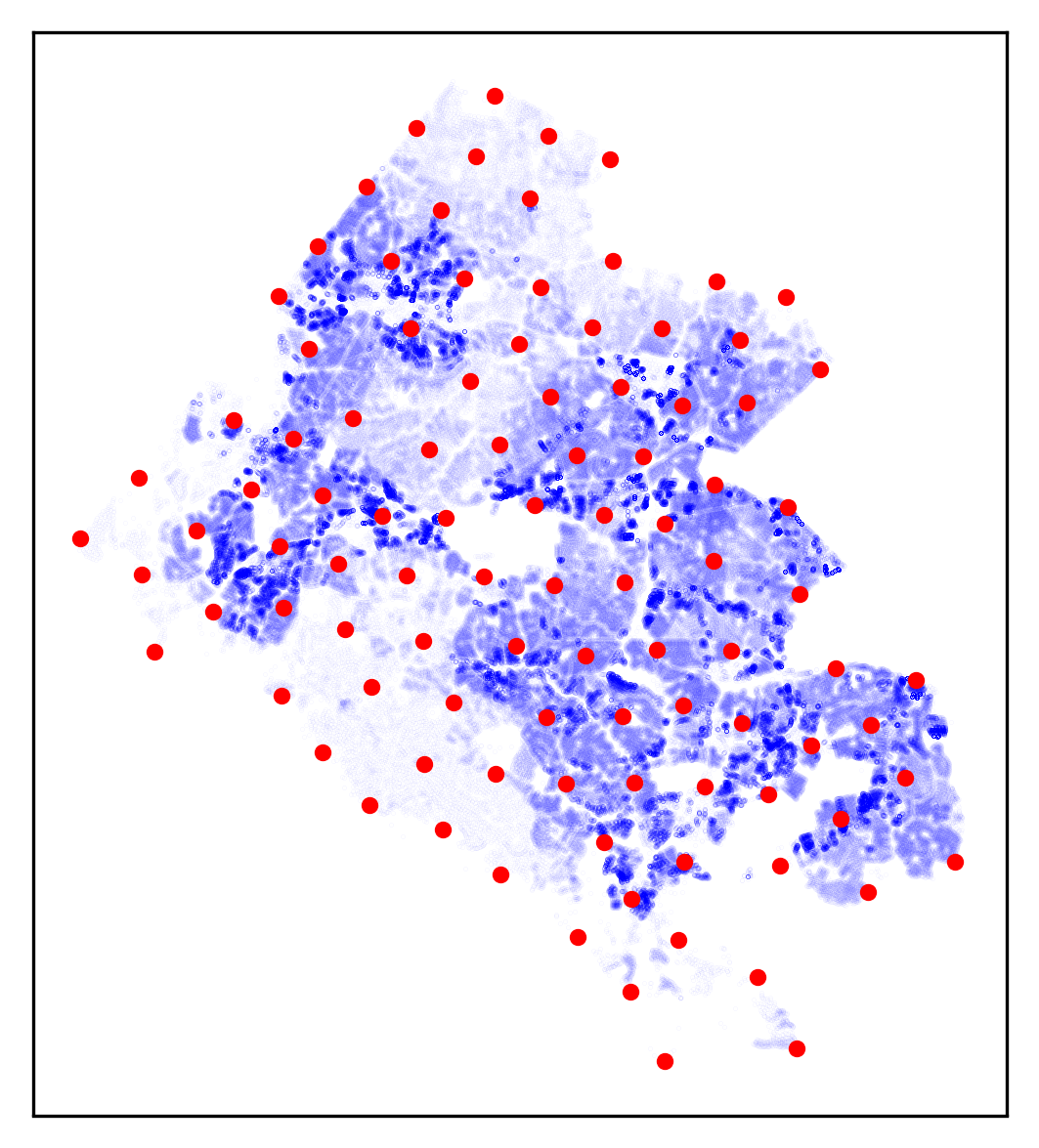}
\caption{$k$-center}
\end{subfigure}

\caption{Placing 100 centers in Fairfax County, using $\textsc{FairKCenter}$ and algorithms for the $k$-means, $k$-medians, and $k$-center problems.}\label{fig:fairfaxmaps}
\end{figure*}

\begin{figure*}[hp]
\begin{subfigure}{0.48\textwidth}
\includegraphics[width=\linewidth]{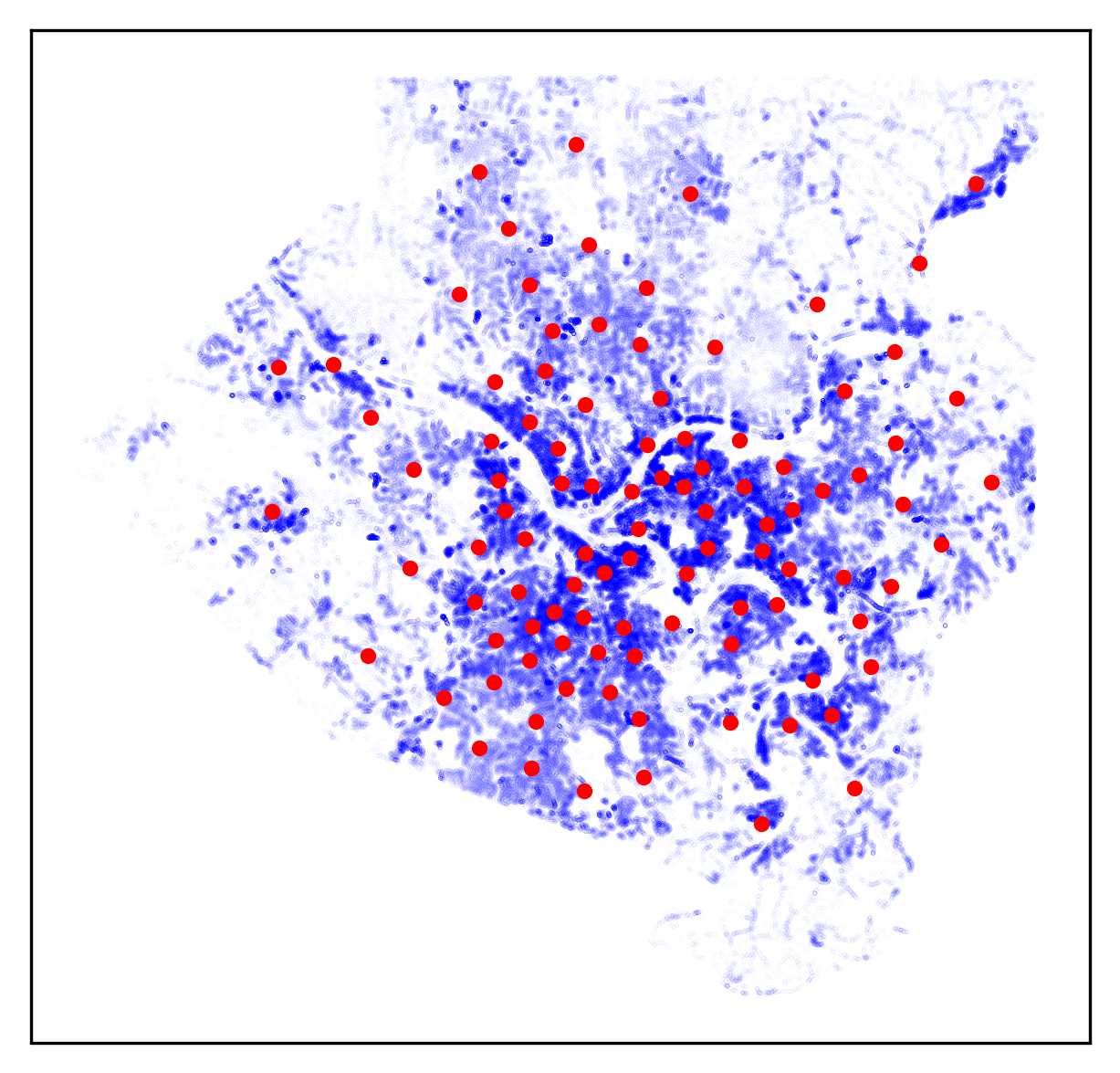}
\caption{\textsc{FairKCenter}}
\end{subfigure}
\hfill
\begin{subfigure}{0.48\textwidth}
\includegraphics[width=\linewidth]{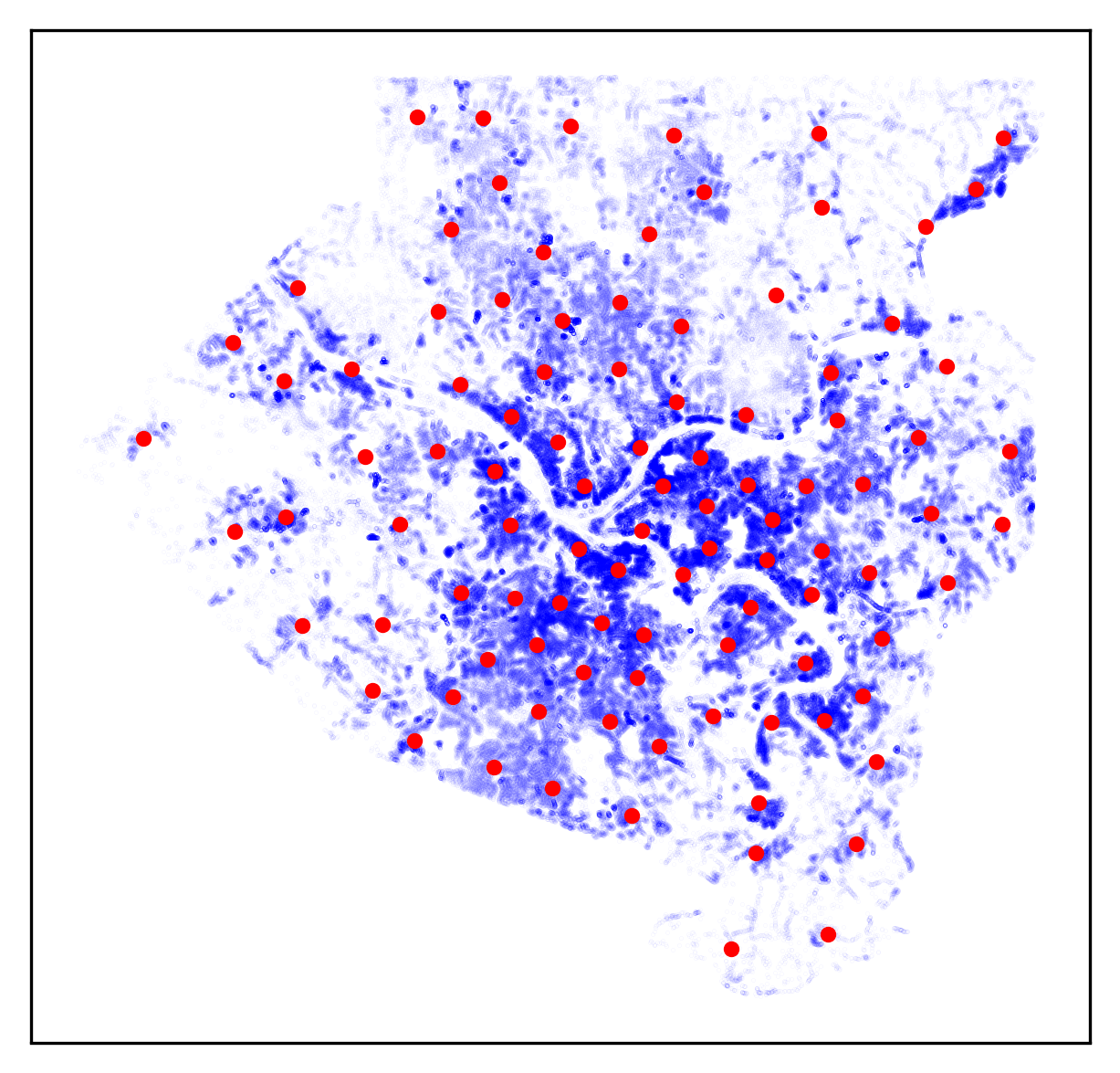}
\caption{$k$-means}
\end{subfigure}

\bigskip
\begin{subfigure}{0.48\textwidth}
\includegraphics[width=\linewidth]{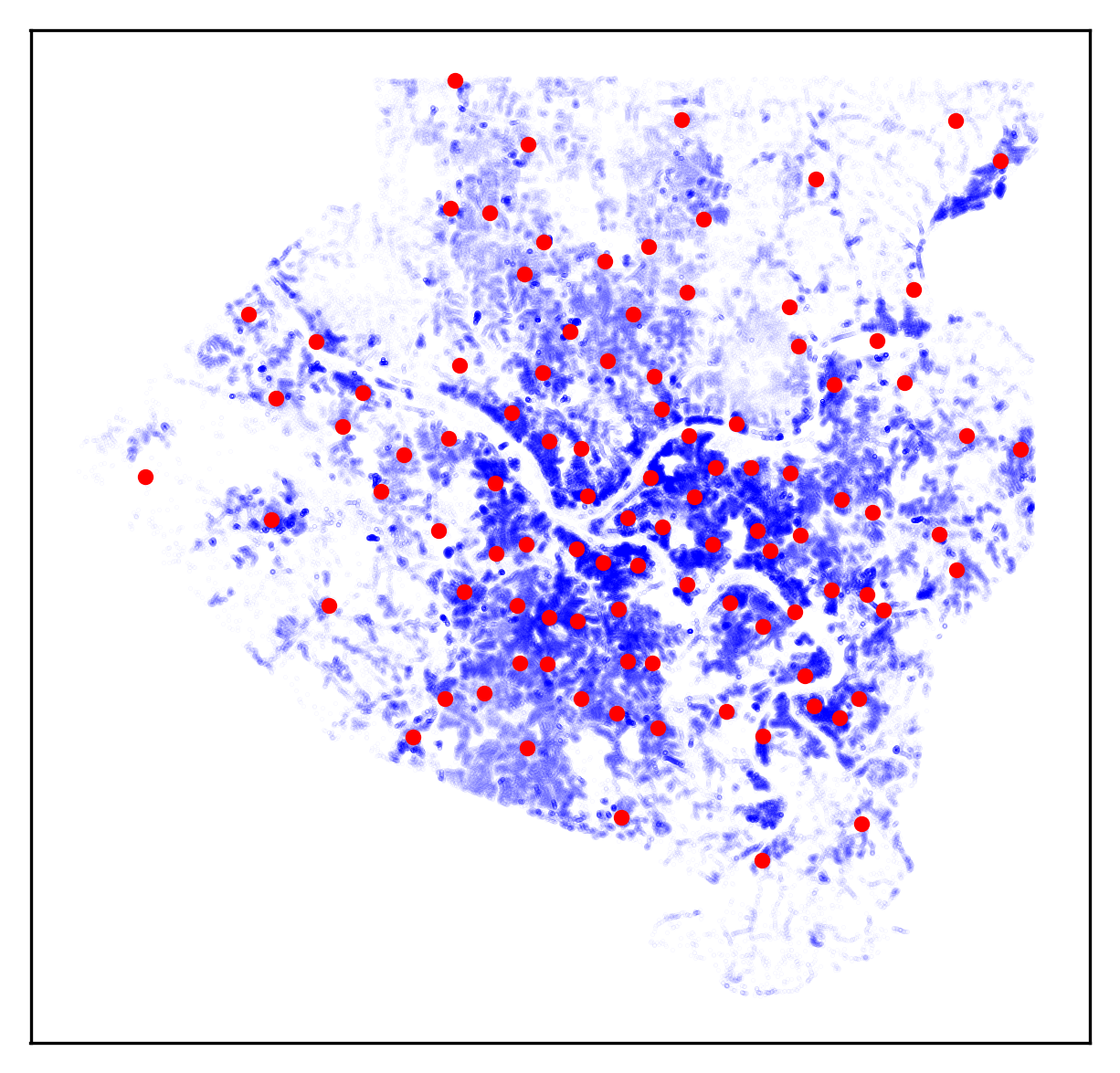}
\caption{$k$-medians}
\end{subfigure}
\hfill
\begin{subfigure}{0.48\textwidth}
\includegraphics[width=\linewidth]{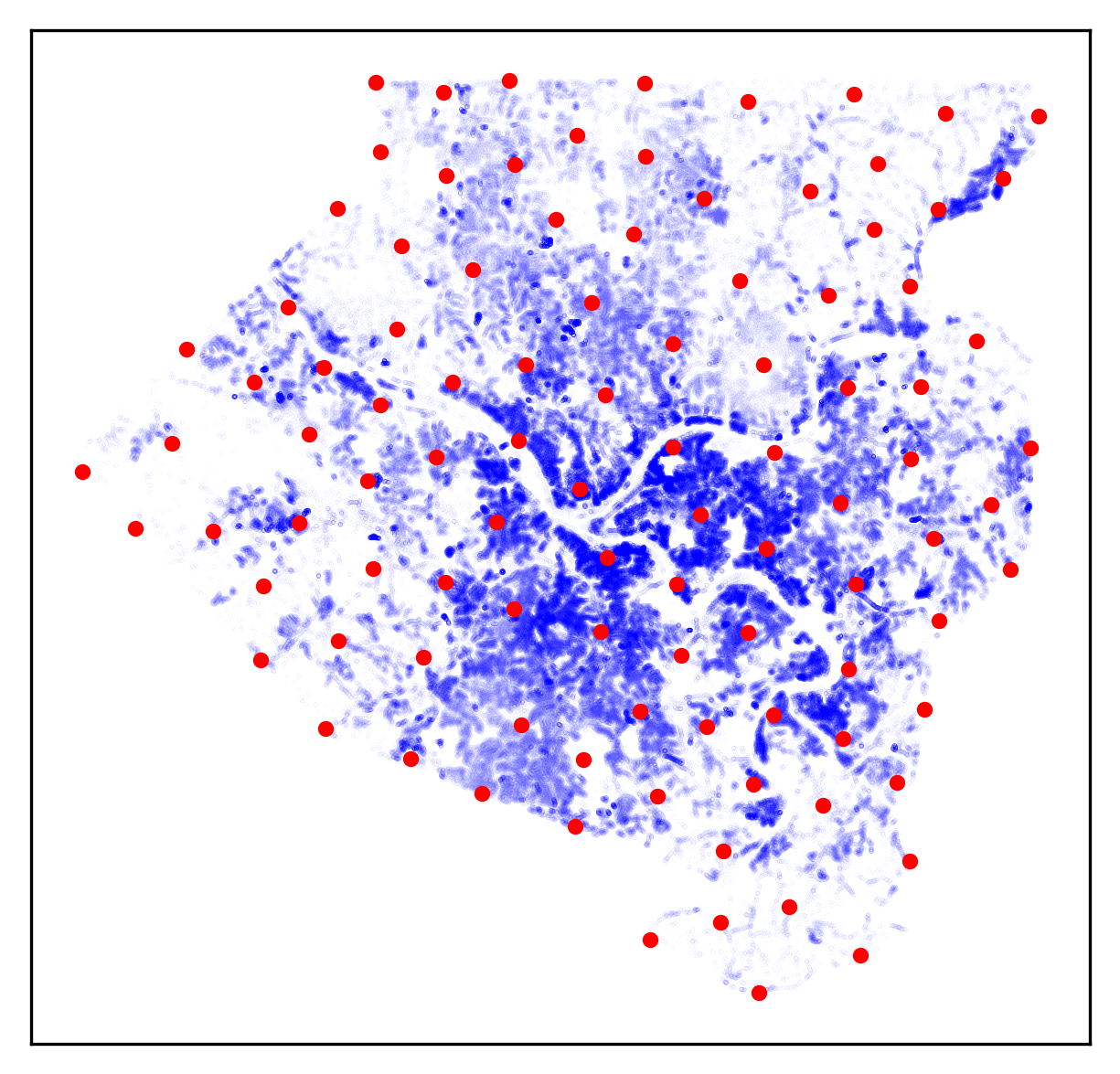}
\caption{$k$-center}
\end{subfigure}

\caption{Placing 100 centers in Allegheny County, using $\textsc{FairKCenter}$ and algorithms for the $k$-means, $k$-medians, and $k$-center problems.}\label{fig:alleghenymaps}
\end{figure*}

We compared the performance of \textsc{FairKCenter} to standard algorithms for the $k$-means, $k$-medians, and $k$-center problems:
\begin{itemize}
	\item For $k$-means, we used the Python library \href{https://scikit-learn.org/stable/modules/generated/sklearn.cluster.KMeans.html}{sklearn}, which employs either Lloyd's algorithm~\cite{lloyd1982least} or Elkan's algorithm~\cite{elkan2003using}, depending on the problem size and parameters.
	\item For $k$-medians, we used the Python library \href{https://codedocs.xyz/annoviko/pyclustering/classpyclustering_1_1cluster_1_1kmedians_1_1kmedians.html}{pyclustering} to execute a variant of Lloyd's algorithm that calculates a median instead of a centroid in each iteration.
	\item For $k$-center, we implemented the standard greedy approximation algorithm~\cite{williamson2011design}.
\end{itemize}

For each county, we assessed the performance of each algorithm according to our fairness objective function $\alpha$ as well as the $k$-means, $k$-medians, and $k$-center objective functions. We also measured how well each algorithm balanced the load by finding the standard deviation in the number of addresses served by each center.

\subsection{Experimental Results}

In Figure~\ref{fig:fairfaxmaps}, we show the population density map of Fairfax County along with 100 centers whose locations were determined by \textsc{FairKCenter}, $k$-means, $k$-medians, and $k$-center. Each tiny blue point, whose transparency has been slightly lowered in order to better show the population density of each region, corresponds to an address point, and each red point is a center. We do the same for Allegheny County in Figure~\ref{fig:alleghenymaps}.

\begin{table}[htp]
\begin{centering}
\caption{Performance of each algorithm on Fairfax County with respect to various objective functions}\label{table:fairfaxobjective}
\begin{tabular}{lllll}
\toprule
&\multicolumn{4}{c}{Objective function}\\ \cmidrule{2-5}
Algorithm              & $\alpha$ & $k$-means & $k$-medians & $k$-center \\
\midrule
\textsc{FairKCenter}			& 1.34306       & 1811007 & 1373.79   & 10002.64  \\
$k$-means		    			& 1.45643       & 1137163 & 1217.07   & 5662.06  \\ 
$k$-medians    					& 1.80263       & 1613910 & 1393.39   & 6446.81  \\ 
$k$-center  			  		& 2.57986       & 2027176 & 1675.85   & 2925.80  \\
\bottomrule
\end{tabular}

\end{centering}
\end{table}

\begin{table}[htp]
\begin{centering}
\caption{Performance of each algorithm on Allegheny County with respect to various objective functions}\label{table:alleghenyobjective}
\begin{tabular}{lllll}
\toprule
&\multicolumn{4}{c}{Objective function}\\ \cmidrule{2-5}
Algorithm              & $\alpha$ & $k$-means & $k$-medians & $k$-center \\
\midrule
\textsc{FairKCenter}			& 1.33721       & 3600461 & 1902.78   &11615.59  \\
$k$-means		    			& 1.57453       & 2082104 & 1632.00   & 6183.67  \\ 
$k$-medians    					& 1.90726       & 3020040 & 1841.34   & 7835.79  \\ 
$k$-center   			  		& 2.67804       & 3763269 & 2272.11   & 3815.03  \\
\bottomrule
\end{tabular}

\end{centering}
\end{table}

In Tables~\ref{table:fairfaxobjective} and \ref{table:alleghenyobjective}, we show how each algorithm performs in terms of each problem's objective function for each dataset respectively. The values are in units of meters for the $k$-medians and $k$-center objective functions, and square meters for the $k$-means objective function.

It is immediately apparent that \textsc{FairKCenter} tends to place more centers in denser regions, compared to other algorithms. This is consistent with the intuition behind the algorithm, as address points in dense regions have relatively smaller neighborhood radius. Although the maximum travel distance is increased significantly relative to the other algorithms, these large travel distances are experienced only by few residents of particularly sparse areas. The increase in average travel distance is more modest, and in Fairfax County our algorithm does even better than the $k$-medians algorithm with respect to the $k$-medians objective function.
In exchange for these compromises, our algorithm does significantly better with respect to $\alpha$, ensuring that no individual will needs to venture too far from their density-dependent neighborhood.

Furthermore, \textsc{FairKCenter} balances the load more evenly across centers than other algorithms. Table~\ref{table:clusterdev} shows the standard deviation in the number of address points served by each center, i.e., the number of points for which that center is the nearest. For both counties, this value is significantly lower for \textsc{FairKCenter} than for the other algorithms. Our algorithm balances load particularly well compared to the $k$-center algorithm, which essentially ignores population density. 

\begin{table}[htp]
\begin{centering}
\caption{Standard deviation in cluster sizes}\label{table:clusterdev}
\begin{tabular}{lll}
\toprule
&\multicolumn{2}{c}{County}\\ \cmidrule{2-3}
Algorithm              & Fairfax & Allegheny \\
\midrule
\textsc{FairKCenter} & 1032.49 & 1696.95\\
$k$-means   & 1344.17 & 2273.53\\
$k$-medians & 1630.06 & 1922.53\\
$k$-center  & 2758.44 & 5691.10 \\
\bottomrule
\end{tabular}

\end{centering}
\end{table}

\section{Conclusion}
We have formulated a simple geometric concept that captures an intuitive notion of fairness: To whatever extent possible, an individual should have access to resources within her own neighborhood. We have proved basic properties of this fairness concept, given a general approximation algorithm for its optimization, and shown that this algorithm performs well on real data.

One potential future direction for this work is to refine the notion of what constitutes an individual's ``neighborhood'' for a given purpose. We used the inverse of local population density as a proxy for the size of a neighborhood, and there are good reasons to believe that these two are correlated. But a more sophisticated approach to defining neighborhood size --- and possibly shape --- might incorporate data on transit times and availability of different modes of transportation. More ambitiously, cellular location data might be used to establish the extent of the common ``orbits'' of residents of a given small area.

On the theoretical side, an obvious next direction is to find algorithms that yield stronger approximation ratios. Our algorithm \textsc{FairKCenter} improved on \textsc{2FairKCenter} in our experiments by less aggressively eliminating candidate centers, but we do not have any theoretical characterization of the instances on which \textsc{FairKCenter} will achieve fairness that is strictly better \textsc{2FairKCenter}. The monotonicity property that is necessary to ensure a fully successful binary search does not hold in general, but one might still be able to identify critical values of $\alpha$ in this range, solve the problem at each of these critical values, and use the smallest one that results in a number of centers that is at most $k$.

Another interesting extension would be to allow Steiner points, removing the restriction that the solution set $S$ is a subset of the population $P$. While \textit{prima facie} it looks like we might have to consider infinitely many such possible centers, one can show that the only points we need to consider as centers are points $x$ such that for some $r$, the boundary of $B_r (x)$ contains at least 3 of the input points. This reduces the number of Steiner points to consider to at most $n^3$.
 
Finally, while we have tight lower and upper bounds on $\alpha$ for arbitrary metric spaces, for Euclidean spaces we have a lower bound of $\sqrt{2}$ and an upper bound of 2. It would be interesting to close the gap, perhaps by improving the upper bound for this case.

\section*{Acknowledgments}

We thank Moni Naor and Omer Reingold for helpful discussions, and we thank Anupam Gupta for alerting us to the similarities between our Theorem~\ref{thm:general} and Lemma 1 of Chan, Dinitz, and Gupta~\cite{ChDiGu06}.

\bibliographystyle{plain}
\bibliography{fairkcenters}

\begin{thebibliography}{10}

\bibitem{gonzalez1985clustering}
T.~Asano, B.~Bhattacharya, M.~Keil, and F.~Yao.
\newblock Clustering algorithms based on minimum and maximum spanning trees.
\newblock pages 252--257, 1988.

\bibitem{bentley1975multidimensional}
Jon~Louis Bentley.
\newblock Multidimensional binary search trees used for associative searching.
\newblock {\em Commun. ACM}, 18(9):509--517, September 1975.

\bibitem{ChDiGu06}
T.~H.~Hubert Chan, Michael Dinitz, and Anupam Gupta.
\newblock Spanners with slack.
\newblock In Yossi Azar and Thomas Erlebach, editors, {\em Algorithms -- ESA
  2006}, pages 196--207, Berlin, Heidelberg, 2006. Springer Berlin Heidelberg.

\bibitem{ChouRoth18}
A.~{Chouldechova} and A.~{Roth}.
\newblock {The Frontiers of Fairness in Machine Learning}.
\newblock {\em arXiv e-prints}, October 2018.

\bibitem{DrineasFKVV04}
Petros Drineas, Alan~M. Frieze, Ravi Kannan, Santosh Vempala, and V.~Vinay.
\newblock Clustering large graphs via the singular value decomposition.
\newblock {\em Machine Learning}, 56(1-3):9--33, 2004.

\bibitem{elkan2003using}
Charles Elkan.
\newblock Using the triangle inequality to accelerate k-means.
\newblock In {\em Proceedings of the Twentieth International Conference on
  International Conference on Machine Learning}, ICML'03, pages 147--153. AAAI
  Press, 2003.

\bibitem{FairfaxData}
Fairfax~County GIS.
\newblock Address points, 2019.

\bibitem{HaddonGasparro16}
Heather Haddon and Annie Gasparro.
\newblock Companies and government seek new answers for food deserts.
\newblock {\em The Wall Street Journal}, October 2016.

\bibitem{Hochbaum84}
Dorit~S. Hochbaum.
\newblock When are {NP}-hard location problems easy?
\newblock {\em Annals {OR}}, 1(3):201--214, 1984.

\bibitem{dubes1988algorithms}
Anil~K. Jain and Richard~C. Dubes.
\newblock Algorithms for clustering data, 1988.

\bibitem{lloyd1982least}
S.~Lloyd.
\newblock Least squares quantization in pcm.
\newblock {\em IEEE Trans. Inf. Theor.}, 28(2):129--137, September 2006.

\bibitem{macqueen1967some}
J.~MacQueen.
\newblock Some methods for classification and analysis of multivariate
  observations.
\newblock In {\em Proceedings of the Fifth Berkeley Symposium on Mathematical
  Statistics and Probability, Volume 1: Statistics}, pages 281--297, Berkeley,
  Calif., 1967. University of California Press.

\bibitem{MegiddoS84}
Nimrod Megiddo and Kenneth~J. Supowit.
\newblock On the complexity of some common geometric location problems.
\newblock {\em {SIAM} J. Comput.}, 13(1):182--196, 1984.

\bibitem{AlleghenyData}
Allegheny County /~City of~Pittsburgh / Western PA Regional Data~Center.
\newblock Allegheny county address points, 2018.

\bibitem{Trulia}
Issi Romem.
\newblock Getting around, or just getting by? {W}here people live with fewer
  cars, 2019.

\bibitem{Census}
{U.S. Census Bureau}.
\newblock Population, housing units, area, and density: 2010 -- county ---
  census tract, 2010 census summary file 1, 2010 Census.

\bibitem{Virella18}
Kelly Virella.
\newblock Doctors and health workers reflect on rural america’s limited
  access to care.
\newblock July 2018.

\bibitem{williamson2011design}
David~P. Williamson and David~B. Shmoys.
\newblock {\em The Design of Approximation Algorithms}.
\newblock Cambridge University Press, New York, NY, USA, 1st edition, 2011.

\end{thebibliography}

\end{document}